\newtheorem{Lemma}{Lemma}
\newtheorem{lemma}[Lemma]{$\mathbf{Lemma}$}
\newcounter{problem}
\newcounter{save@equation}
\newcounter{save@problem}
\newenvironment{problem}
{\setcounter{problem}{\value{save@problem}}%
  \setcounter{save@equation}{\value{equation}}%
  \let\c@equation\c@problem
  \subequations
}
{\endsubequations
  \setcounter{save@problem}{\value{equation}}%
  \setcounter{equation}{\value{save@equation}}%
}
\begin{document}
\title{ { On the Application of BAC-NOMA to 6G  umMTC }}

\author{ Zhiguo Ding, \IEEEmembership{Fellow, IEEE} and  H. Vincent Poor, \IEEEmembership{Life Fellow, IEEE}\thanks{
  Z. Ding and H. V. Poor are  with the Department of
Electrical Engineering, Princeton University, Princeton, NJ 08544,
USA. Z. Ding
 is also  with the School of
Electrical and Electronic Engineering, the University of Manchester, Manchester, UK (email: \href{mailto:zhiguo.ding@manchester.ac.uk}{zhiguo.ding@manchester.ac.uk}, \href{mailto:poor@princeton.edu}{poor@princeton.edu}).

}\vspace{-1.5em}} \maketitle
\begin{abstract}
 This letter studies the application of backscatter communications (BackCom) assisted non-orthogonal multiple access (BAC-NOMA) to the envisioned sixth-generation (6G)   ultra-massive machine type communications (umMTC).  In particular, the proposed BAC-NOMA transmission scheme can realize  simultaneous energy and spectrum cooperation between uplink and downlink users, which is important to support massive connectivity and stringent energy constraints in umMTC. Furthermore,  a resource allocation   problem for maximizing the uplink throughput and suppressing the interference between downlink and uplink transmission is formulated as an optimization problem and the corresponding optimal resource allocation policy is obtained. Computer simulations are provided to   demonstrate the superior performance   of  BAC-NOMA. 
\end{abstract}\vspace{-1.5em}

\section{Introduction}
One of the key  communication scenarios to be supported by the envisioned sixth-generation (6G) mobile network is ultra-massive machine type communications (umMTC) \cite{you6g}.  The main feature of umMTC is that there are an extremely large number of low-power devices to be connected, e.g., there are expected to be  more than $10^7$ devices/km$^2$ and many of these devices are energy-constrained Internet of Things (IoT) sensors. How to serve this huge number of IoT devices is challenging, due to   spectrum scarcity  and energy constraints.   

To tackle the spectrum constraint, non-orthogonal multiple access (NOMA) and full-duplex (FD) transmission have  been proposed to encourage   spectrum cooperation among wireless  users \cite{mojobabook,7676258,7105651,8306094}. To tackle the energy constraint, various energy cooperation schemes have been proposed. For example, the use of simultaneous wireless information and power transfer (SWIPT) can ensure that  the transmission of an energy-constrained device  is powered by the energy harvested from  the  signals sent by a non-energy-constrained device \cite{yuanweijsac,Zhangruipower2013}. Alternatively,   backscatter communication (BackCom)   can also be used to realize energy cooperation among the devices, where one device's signal is used to excite  the BackCom circuit of another device   \cite{backnoma,8907447,8636518}. Note that compared to SWIPT,    BackCom  technologies are more  mature and have already been extensively applied to various IoT applications.  

This letter considers the combination of NOMA and BackCom, termed BAC-NOMA, and investigates its application to 6G umMTC. Unlike the existing schemes in \cite{backnoma,8907447,8636518}, an application of BAC-NOMA to FD uplink and downlink transmission is focused on in this letter.  In particular, from the spectrum cooperation perspective, FD is use to ensure spectrum sharing between uplink and downlink users. Unlike FD assisted    orthogonal multiple access (FD-OMA) which is to pair a single uplink device to a downlink user, the use of NOMA ensures that   multiple uplink   devices are served simultaneously.     From the energy cooperation perspective, the signals sent by the base station to  the downlink user are used to excite the BackCom circuits of  the uplink devices. As a result,  battery-less uplink transmission can be supported, and there is no need for the base station to send dedicated single-tone sinusoidal continuous waves, as in conventional BacCom. Note that the signals reflected by the BackCom uplink devices can cause interference to   downlink transmission, which motivates the resource allocation problem addressed  in the letter. In particular, an uplink throughput maximization problem is formulated, by using  the downlink user's   quality of service (QoS) requirement as a constraint.  The formulated optimization problem is not concave, but its optimal solution can be obtained   by recasting it to an equivalent     linear programming (LP) problem. Computer simulations are also provided to demonstrate the superior performance gain of BAC-NOMA over conventional schemes.  

 \begin{figure}[t]\centering \vspace{-0em}
    \epsfig{file=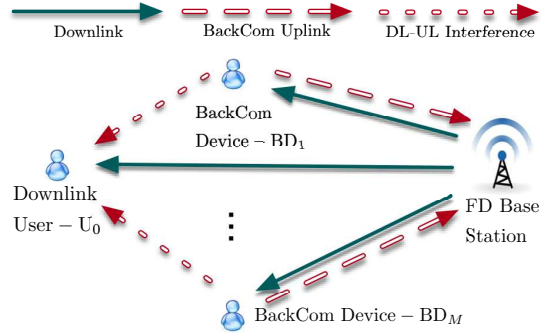, width=0.4\textwidth, clip=}\vspace{-0.5em}
\caption{ Illustration of the considered BAC-NOMA communication scenario.       }\label{fig1}\vspace{-1.5em}
\end{figure}

\vspace{-1em}
\section{System Model}
Consider a communication scenario with one full-duplex base station, one downlink user, denoted by ${\rm U_0}$, and $M$ uplink BackCom devices, denoted by ${\rm BD}_m$, $1\leq m \leq M$, as shown in Fig. \ref{fig1}. Each node is equipped with a single antenna.  For conventional BackCom, the base station needs to transmit single-tone sinusoidal continuous waves which are to excite the  circuit of a BackCom device. For   BAC-NOMA, the signals sent to the downlink user will be used to excite the circuits of the uplink BackCom devices. 
The signal received at    ${\rm BD}_m$ is given by $ \sqrt{P_0}h_ms_0$, and the signal reflected by ${\rm BD}_m$ is given by $\sqrt{\eta_m P_0}h_ms_0s_m$,
where $h_m$ denotes the channel gain between   ${\rm BD}_m$ and the base station, $P_0$ is the downlink transmit power, $s_0$ denotes ${\rm U}_0$'s signal, $s_m$ denotes ${\rm BD}_m$'s signal, and $\eta_m$ denotes the BackCom reflection coefficient, $0\leq \eta_m\leq 1$ \cite{wongwcnc20}. The power of $s_m$, $0\leq m\leq M$, is assumed to be normalized.  

The signal received at the base station is given by
\begin{align}
y_{\rm BS} = \sum^{M}_{m=1} \sqrt{\eta_m P_0}h_m^2s_0s_m +s_{\rm SI}+n_{\rm BS},
\end{align}
and the signal received by the downlink user is given by
\begin{align}\label{down model}
y_{\rm D} =\sqrt{P_0}h_0s_0 + \sum^{M}_{m=1} \sqrt{\eta_m P_0}g_mh_ms_0s_m +n_{\rm D},
\end{align}
where $s_{\rm SI}$ denotes the FD self-interference, $h_0$ denotes the channel gain between ${\rm U}_0$ and the base station, $g_m$ denotes the channel gain between ${\rm U}_0$ and ${\rm BD}_m$, $n_{\rm BS}$ and $n_{\rm D}$ denote the noises at the base station and ${\rm U}_0$, respectively. As  in \cite{7372448},   $s_{\rm SI}$ is assumed to be complex Gaussian distributed, i.e., $s_{\rm SI}\sim CN(0,\alpha P_0|h_{\rm SI}|^2)$, where $h_{\rm SI}$ denotes the self-interference channel and is  assumed to be complex Gaussian distributed, i.e., $h_{\rm SI}\sim CN(0,1)$, and $0\leq \alpha\ll 1$ indicates the amount of FD residual self-interference. It is assumed that both $n_{\rm BS}$ and $n_{\rm D}$ have the same power,  denoted by $\sigma^2$.

The uplink sum rate achieved by BAC-NOMA  transmission can be written as follows:
\begin{align}\label{sum ratex}
R_{\rm sum} = \log\left( 1+\frac{ \sum^{M}_{m=1} \eta_m P_0|h_m|^4|s_0|^2}{\alpha P_{0} |h_{\rm SI}|^2+\sigma^2}\right),
\end{align}
which is conditioned on $|s_0|^2$. 
 The downlink user's data rate is given by
 \begin{align} \label{down rate}
  R_{\rm D} = \log\left( 1+\frac{P_0|h_0|^2}{
  \sum^{M}_{m=1}  \eta_m P_0 |g_m|^2|h_m|^2   +\sigma^2
  }
 \right).
 \end{align}
{\it Remark 1:} We  note that $s_0$ is perfectly known  by the base station priori to transmission, which means that the expression of $R_{\rm sum}$ shown in \eqref{sum ratex} can be used as an objective function for resource allocation.   However, resource allocation based on $R_{\rm sum}$ requires  strong assumptions, e.g., an individual  resource allocation policy might be needed for each   choice of $|s_0|^2$. Therefore, in the next section, a more computationally efficient way will be used by carrying out resource allocation based on the expectation of the uplink sum rate with respect to $|s_0|^2$, i.e., $\mathcal{E}_{|s_0|^2}\left\{R_{\rm sum}\right\}$, where $\mathcal{E}\{\cdot\}$ denotes an expectation operation. The difference between the resource allocation polices based on $\mathcal{E}_{|s_0|^2}\left\{R_{\rm sum}\right\}$ and $ R_{\rm sum} $ will also be discussed. 

\section{Optimal Resource Allocation}
The resource allocation   problem considered in this paper is to maximize the  sum rate of the uplink BackCom devices.  As discussed in the previous section,  $\mathcal{E}_{|s_0|^2}\left\{R_{\rm sum}\right\}$ is a more appropriate   choice, and it can be evaluated  as follows: 
\begin{align}
\bar{R}_{\rm sum} = \int^{\infty}_{0}\log\left( 1+\frac{ \sum^{M}_{m=1} \eta_m P_0|h_m|^4x}{\alpha P_{0} |h_{\rm SI}|^2+\sigma^2}\right)f_{|s_0|^2}(x)dx ,
\end{align}
where  $f_{|s_0|^2}(x)$ denotes the probability density function (pdf) of  $|s_0|^2$.  
By assuming that  $s_0$ is complex Gaussian distributed with zero mean and unit variance, the pdf of  $|s_0|^2$ is given by $f_{|s_0|^2}(x)=e^{-x}$. With some algebraic manipulation,  $\bar{R}_{\rm sum} $ can be expressed as follows:
\begin{align}
&\bar{R}_{\rm sum}  \\\nonumber &=  -\log_2(e)  e^{\frac{\alpha P_{0} |h_{\rm SI}|^2+\sigma^2}{ \sum^{M}_{m=1} \eta_m P_0|h_m|^4}}E_i\left(-\frac{\alpha P_{0} |h_{\rm SI}|^2+\sigma^2}{ \sum^{M}_{m=1} \eta_m P_0|h_m|^4}\right),
\end{align}
where  $E_i(\cdot)$ denotes the exponential integral function \cite{GRADSHTEYN}. Therefore, the considered throughput maximization problem can be formulated as follows:
  \begin{problem}\label{pb:1} 
  \begin{alignat}{2}
\underset{P_0,\eta_m}{\rm{max}} &\quad    
-  e^{\frac{\alpha P_{0} |h_{\rm SI}|^2+\sigma^2}{ \sum^{M}_{m=1} \eta_m P_0|h_m|^4}}E_i\left(-\frac{\alpha P_{0} |h_{\rm SI}|^2+\sigma^2}{ \sum^{M}_{m=1} \eta_m P_0|h_m|^4}\right) \label{1obj:1} \\
\rm{s.t.} & \quad  
\log\left( 1+\frac{P_0|h_0|^2}{
  \sum^{M}_{m=1}  \eta_m P_0 |g_m|^2|h_m|^2   +\sigma^2
  }
 \right) \geq R_0\label{1st:1}
\\
& \quad 0\leq \eta_m\leq 1, \quad 1\leq m \leq M \label{1st:2}
\\
& \quad 0\leq P_0\leq P_{\rm max} \label{1st:3},
  \end{alignat}
\end{problem} 
where $R_0$ denotes ${\rm U}_0$'s target data rate and $P_{\rm max}$ denotes the transmit power budget at the base station. We note that the use of constraint \eqref{1st:1} is to ensure that  the spectrum cooperation between the downlink user and the    multiple  BackCom devices is carried out without degrading     the downlink user's QoS experience. 

{\it Remark 2:} Problem \eqref{pb:1}  is challenging to solve due to the following three reasons. Firstly, its objective function   and    constraint \eqref{1st:1} contain multiplications of the optimization variables. Secondly, both the objective function and   constraint \eqref{1st:1} contain fractional functions of the optimization variables. Thirdly, the existence of the exponential integral function makes it difficult to analyze the concavity of the objective function.  However, problem \eqref{pb:1} can be recasted to an equivalent linear programming (LP) form, as shown in the following. 

By introducing new variables, $P_m\triangleq \eta_m P_0$,   problem \eqref{pb:1} can be recasted as the following equivalent form:
\begin{problem}\label{pb:2} 
  \begin{alignat}{2}
\underset{P_0,P_m}{\rm{max}} &\quad    
-  e^{\frac{\alpha P_{0} |h_{\rm SI}|^2+\sigma^2}{ \sum^{M}_{m=1} P_m|h_m|^4}}E_i\left(-\frac{\alpha P_{0} |h_{\rm SI}|^2+\sigma^2}{ \sum^{M}_{m=1} P_m|h_m|^4}\right) \label{2obj:1} \\
\rm{s.t.} & \quad  
 P_0|h_0|^2 -
 \epsilon_0 \sum^{M}_{m=1}  P_m |g_m|^2|h_m|^2   -\epsilon_0\sigma^2
    \geq 0\label{2st:1}
\\
& \quad 0\leq P_m\leq P_0 \label{2st:2}
\\
& \quad 0\leq P_0\leq P_{\rm max} \label{2st:3},
  \end{alignat}
\end{problem}  
where $ \epsilon_0=2^{R_0}-1$. As shown in problem \eqref{pb:2},   multiplications between the optimization variables are avoided. 

In order to convert the addressed problem to an LP form, problem \eqref{pb:2}  needs to be first expressed in a more compact way by using the following vector/matrix definitions: 
\begin{align}
\mathbf{a} =& \begin{bmatrix} 0&|h_1|^4 &\cdots &|h_M|^4 \end{bmatrix}^T,\\\nonumber
\mathbf{d} =& \begin{bmatrix} \alpha  |h_{\rm SI}|^2&0&\cdots &0 \end{bmatrix}^T,\\\nonumber
\mathbf{a}_1 =& \begin{bmatrix}  \epsilon_0    |g_1|^2|h_1|^2 &\cdots &\epsilon_0    |g_M|^2|h_M|^2 \end{bmatrix},
\\\nonumber
\mathbf{A} = &\begin{bmatrix}- |h_0|^2 &\mathbf{a}_1\\ \mathbf{0}_{M\times 1} &-\mathbf{I}_M \\ -\mathbf{1}_{M\times 1} &\mathbf{I}_M
\\ -1&\mathbf{0}_{1\times M}\\ 1&\mathbf{0}_{1\times M}
\end{bmatrix},
\\\nonumber
\mathbf{b} =& \begin{bmatrix} -\epsilon_0\sigma^2 &\mathbf{0}_{1\times M}&\mathbf{0}_{1\times M}
&0&P_{\rm max}
 \end{bmatrix}^T.
\end{align}
By using the above definitions, problem \eqref{pb:2}  can be recasted as the following equivalent and more compact form: 
\begin{problem}\label{pb:3} 
  \begin{alignat}{2}
\underset{P_0,P_m}{\rm{max}} &\quad    
-  e^{\frac{\mathbf{d}^T\mathbf{p}+\sigma^2}{ \mathbf{a}^T\mathbf{p}}}E_i\left(-\frac{\mathbf{d}^T\mathbf{p}+\sigma^2}{ \mathbf{a}^T\mathbf{p}}\right) \label{3obj:1} \\
\rm{s.t.} & \quad  
\mathbf{A}\mathbf{p}\leq \mathbf{b}\label{3st:2},
  \end{alignat}
\end{problem}  
where $\mathbf{p}=\begin{bmatrix}P_0&\cdots &P_M \end{bmatrix}^T$. 
To remove the exponential integral function from the considered optimization problem, the following lemma will be needed.

\begin{lemma}\label{lemma1}
$f(x)\triangleq  -e^{\frac{1}{x}}E_i\left(- \frac{1}{x}\right)$, $x\geq0$, is a monotonically increasing function of $x$. 
\end{lemma}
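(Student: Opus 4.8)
The plan is to reduce the claim to a classical sharp bound on the exponential integral. First I would use the representation $-E_i(-1/x)=\int_{1/x}^{\infty}\frac{e^{-u}}{u}\,du$, valid for $x>0$, so that $f(x)=e^{1/x}\int_{1/x}^{\infty}\frac{e^{-u}}{u}\,du$. This makes $f(x)>0$ transparent, and the standard asymptotics $-E_i(-t)\sim e^{-t}/t$ as $t\to\infty$ give $f(0^{+})=0$, so $f$ extends continuously to $x=0$ with $f(0)=0$; it therefore suffices to prove $f'(x)>0$ on $(0,\infty)$.

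Next I would differentiate. Using $\frac{d}{dz}E_i(z)=e^{z}/z$ and the chain rule, a one-line computation gives
\begin{align}
f'(x)=\frac{e^{1/x}}{x^{2}}E_i\!\left(-\tfrac{1}{x}\right)+\frac{1}{x}=\frac{1}{x}\left(1-\frac{f(x)}{x}\right),
\end{align}
so that, since $x>0$, the inequality $f'(x)>0$ is \emph{equivalent} to the single pointwise bound $f(x)<x$, i.e.
\begin{align}
e^{1/x}\int_{1/x}^{\infty}\frac{e^{-u}}{u}\,du<x .
\end{align}

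The remaining step is to prove this bound, which is the only real content. Setting $t=1/x$ and substituting $u=t+s$ turns the claim into $\int_{0}^{\infty}\frac{e^{-s}}{t+s}\,ds<\frac{1}{t}$, which is immediate from $\frac{1}{t+s}<\frac{1}{t}$ for $s>0$ together with $\int_{0}^{\infty}e^{-s}\,ds=1$. Hence $f'(x)>0$ for all $x>0$, and with continuity at the origin this yields that $f$ is (strictly) increasing on $[0,\infty)$.

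I expect the main obstacle to be bookkeeping rather than any genuine difficulty: one must fix the convention for $E_i$ at negative arguments (no principal value is needed there, so $E_i$ is smooth on $(-\infty,0)$), justify differentiating the integral in $x$, and handle the limit $x\to0^{+}$ cleanly. An equivalent, differentiation-free route is to set $t=1/x$ and show $g(t)\triangleq e^{t}\int_{t}^{\infty}\frac{e^{-u}}{u}\,du$ is decreasing on $(0,\infty)$; its derivative is $g'(t)=g(t)-\frac{1}{t}$, so the monotonicity of $g$ rests on exactly the same inequality $g(t)<\frac{1}{t}$, and $f(x)=g(1/x)$ then finishes it. I would present whichever version reads more smoothly.
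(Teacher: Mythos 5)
Your proof is correct, and its first half coincides exactly with the paper's: both compute $f'(x)=\frac{e^{1/x}}{x^{2}}E_i\left(-\frac{1}{x}\right)+\frac{1}{x}$ and reduce the claim to the single pointwise bound $-E_i\left(-\frac{1}{x}\right)\leq x e^{-1/x}$ (your $f(x)<x$ is precisely the paper's $g(x)=E_i\left(-\frac{1}{x}\right)+xe^{-1/x}\geq 0$ after multiplying by $x^{2}e^{-1/x}>0$). Where you diverge is in how that bound is established. The paper differentiates a second time, finds $g'(x)=e^{-1/x}\geq 0$, and then anchors the monotone function $g$ at the endpoint via the asymptotic expansion $E_i(t)\approx e^{t}/t$ as $t\to-\infty$; that last step is the least rigorous part of the paper's argument (it is written as an identity $g(0)=0$ rather than a limit). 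You instead prove the bound directly from the integral representation: with $t=1/x$ and $u=t+s$,
\begin{align}
e^{t}\int_{t}^{\infty}\frac{e^{-u}}{u}\,du=\int_{0}^{\infty}\frac{e^{-s}}{t+s}\,ds<\frac{1}{t}\int_{0}^{\infty}e^{-s}\,ds=\frac{1}{t},
\end{align}
which is elementary, self-contained, and even gives strict monotonicity. The trade-off is minor: the paper's route needs only derivative formulas and a tabulated asymptotic, while yours requires invoking the integral representation of $E_i$ on the negative axis; in exchange you avoid the second differentiation and the limit argument at $x=0$ entirely. Either version is acceptable; yours is the tighter write-up.
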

\begin{proof} The lemma can be proved by  showing that the first order derivative of $f(x)$ is non-negative, which requires the derivative of the exponential integral function. We first note that an integral expression of the exponential integral function  is given by \cite{GRADSHTEYN}
\begin{align}
E_i(x) = \int^{x}_{\infty}\frac{e^t}{t}dt, \quad x \leq 0,
\end{align}
which means that the first-order derivative of $E_i(x)$ is given by
\begin{align}\label{eix}
\frac{dE_i(x)}{dx} = \frac{e^x}{x}.
\end{align}
By using \eqref{eix}, we can also obtain the following:   $\frac{dE_i\left(-\frac{1}{x}\right)}{dx} = -\frac{e^{-\frac{1}{x}}}{x}$. 
Therefore, the first order derivative of $f(x)$ can be obtained as follows:
\begin{align}\nonumber 
\frac{df(x)}{dx} =& \frac{e^{\frac{1}{x}}}{x^2}E_i\left(- \frac{1}{x}\right) +e^{\frac{1}{x}} \frac{e^{-\frac{1}{x}}}{x}
\\\label{first orderdd}
=& \frac{e^{\frac{1}{x}}}{x^2}E_i\left(- \frac{1}{x}\right) +\frac{1}{x}.
\end{align}
We note that $E_i\left(- \frac{1}{x}\right)\leq 0$, for $x\geq 0$, which means that \eqref{first orderdd} is not necessarily non-negative. To show that $\frac{df(x)}{dx} $ is non-negative,  we define the following function: 
\begin{align}
g(x) = E_i\left(- \frac{1}{x}\right) +xe^{-\frac{1}{x}}.
\end{align}
It is straightforward to show that $g(x)\geq 0$ is equivalent to $\frac{df(x)}{dx} \geq 0$. The first order derivative of $g(x)$ can be obtained as follows:
\begin{align}
\frac{dg(x)}{dx}  =& -\frac{e^{-\frac{1}{x}}}{x}+e^{-\frac{1}{x}}+xe^{-\frac{1}{x}}\frac{1}{x^2}= e^{-\frac{1}{x}}\geq 0,
\end{align}
which  means that $g(x)$ is a monotonically increasing function of $x$. Therefore, $g(x)$ can be lower bounded as follows:
\begin{align}\label{gxx}
g(x)\geq g(0)= E_i\left(- \frac{1}{x}\right) +xe^{-\frac{1}{x}}=0,
\end{align} 
where the last step follows from the following approximation  that $E_i\left(x\right) \approx \frac{e^x}{x}$, for $x\rightarrow -\infty$ \cite{GRADSHTEYN}. \eqref{gxx} implies that 
$\frac{df(x)}{dx} \geq 0$. Or in other words, $f(x)$ is a monotonically increasing function of $x$, and the lemma is proved.  
\end{proof}

By using Lemma \ref{lemma1}, the optimal solution of problem \eqref{pb:3} can be found by recasting problem \eqref{pb:3}  as follows: 
\begin{problem}\label{pb:4} 
  \begin{alignat}{2}
\underset{P_0,P_m}{\rm{max}} &\quad    
\frac{ \mathbf{a}^T\mathbf{p}}{\mathbf{d}^T\mathbf{p}+\sigma^2} \label{4obj:1} \\
\rm{s.t.} & \quad  
\mathbf{A}\mathbf{p}\leq \mathbf{b}\label{4st:2}.
  \end{alignat}
\end{problem}  
Note that problem \eqref{pb:4} is a linear fractional problem. As a special case of quasi-concave problems, the bi-section method can be used to solve this maximization  problem, by converting it to a series of  convex feasibility problems. In contrast, a more computationally efficient way is to convert problem \eqref{pb:4} to an LP form as follows: \cite{Boyd}
 \begin{problem}\label{pb:5} 
  \begin{alignat}{2}
\underset{\mathbf{y},z}{\rm{max}} &\quad    
 \mathbf{a}^T\mathbf{y}\label{5obj:1} \\
\rm{s.t.} & \quad  
 \mathbf{A}\mathbf{y}\leq \mathbf{b}z\label{5st:1} 
, \quad \mathbf{d}^T\mathbf{y}+\sigma^2z=1,  
 \quad z\geq 0 ,
  \end{alignat}
\end{problem} 
where $\mathbf{y}= \frac{ \mathbf{p}}{\mathbf{d}^T\mathbf{p}+\sigma^2}$ and $z= \frac{1}{\mathbf{d}^T\mathbf{p}+\sigma^2}$. For the simple LP form shown in \eqref{pb:5}, various optimization solvers, such as CVX \cite{Boyd}, can be used to find its optimal solution.  By using the obtained  solution of problem \eqref{pb:5}, denoted by $\mathbf{y}^*$ and $z^*$,  the optimal value of problem \eqref{pb:1}, denoted by $p^*$, can be found as follows. First, we note the following relationship between $\mathbf{y}^*$  and the optimal solution of problem \eqref{pb:4}, denoted by $\mathbf{p}^*$, as follows:  
\[
\frac{ \mathbf{a}^T\mathbf{p}^*}{\mathbf{d}^T\mathbf{p}^*+\sigma^2} =\mathbf{a}^T\mathbf{y}^*. 
\]
Since the objective function  of problem \eqref{pb:1} is a function of $\frac{ \mathbf{a}^T\mathbf{p}}{\mathbf{d}^T\mathbf{p}+\sigma^2} $, $p^*$ can be expressed as a function of $\mathbf{y}^*$ as follows: 
\begin{align}
p^*=&-  \log_2(e)e^{\frac{\mathbf{d}^T\mathbf{p}+\sigma^2}{ \mathbf{a}^T\mathbf{p}^*}}E_i\left(-\frac{\mathbf{d}^T\mathbf{p}^*+\sigma^2}{ \mathbf{a}^T\mathbf{p}^*}\right) \\\nonumber =&-  \log_2(e)e^{\frac{1}{\mathbf{a}^T\mathbf{y}^*}}E_i\left(-\frac{1}{\mathbf{a}^T\mathbf{y}^*}\right).  
\end{align}

\subsection{An Alternative Problem Formulation }
The aforementioned resource allocation   problem is to maximize the average sum rate, $\mathcal{E}_{|s_0|^2}\left\{R_{\rm sum}\right\}$.  Alternatively, by using the instantaneous sum rate, $R_{\rm sum}$, as the objective function, the throughput maximization problem can be formulated as follows:  
  \begin{problem}\label{pb:6} 
  \begin{alignat}{2}
\underset{P_0,\eta_m}{\rm{max}} &\quad    
\log\left( 1+\frac{ \sum^{M}_{m=1} \eta_m P_0|h_m|^4|s_0|^2}{\alpha P_{0} |h_{\rm SI}|^2+\sigma^2}\right) \label{6obj:1} \\
\rm{s.t.} & \quad  
\eqref{1st:1},   \eqref{1st:2}, \eqref{1st:2}.
  \end{alignat}
\end{problem} 
By using the fact that $\log(1+x)$ is a monotonically increasing function of $x$ and following the steps to reformulate  problem \eqref{pb:1}, problem  \eqref{pb:6}  can also be recasted to an equivalent LP form as follows: 
\begin{problem}\label{pb:7} 
  \begin{alignat}{2}
\underset{\mathbf{y},z}{\rm{max}} &\quad    
 \bar{\mathbf{a}}^T\mathbf{y}\label{7obj:1}, \quad
\rm{s.t.} & \quad   \eqref{5st:1} ,
  \end{alignat}
\end{problem} 
where   $\bar{\mathbf{a}}$ is defined as follows: 
\begin{align}
\bar{\mathbf{a}} =& \begin{bmatrix} 0&|h_1|^4|s_0|^2&\cdots &|h_M|^4|s_0|^2 \end{bmatrix}^T.
\end{align} 
Note that   $\bar{\mathbf{a}} = |s_0|^2{\mathbf{a}} $, which means that the objective function of problem \eqref{pb:7} is a scaled version of the objective of problem \eqref{pb:5}. Because $|s_0|^2\geq  0$, problem \eqref{pb:7} is equivalent  to problem \eqref{pb:5}, which means that  problems \eqref{pb:1} and \eqref{pb:7} are equivalent. Therefore,  the two different formulations shown in \eqref{pb:1} and \eqref{pb:7}  yield the same resource allocation polices.

 \begin{figure}[t]\centering \vspace{-0em}
    \epsfig{file=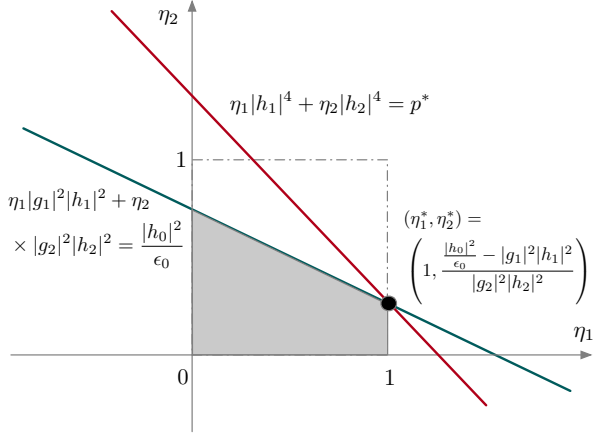, width=0.43\textwidth, clip=}\vspace{-0.5em}
\caption{ Illustration for the feasibility region of the approximated  LP problem shown in \eqref{pb:8}, and the optimal solution for $\eta_m$ shown in \eqref{optimal eta}.  It is assumed that  $ |g_1|^2|h_1|^2    +  |g_2|^2|h_2|^2 
    > \frac{ |h_0|^2}{ \epsilon_0 } $, $|h_1|^2=|h_2|^2$ and $|g_1|^2<|g_2|^2$, i.e.,    ${\rm BD}_2$ has a stronger connection  to ${\rm U}_0$ than ${\rm BD}_1$.     }\label{fig2}\vspace{-1.5em}
\end{figure}

\subsection{Closed-Form Solutions for  the Two-User   Special Cases} \label{subsection special case}
For the two-user case with high signal-to-noise ratio (SNR), i.e., $M=2$ and $\sigma_n^2 \rightarrow 0$, the formulated problem can be approximated  as follows:
  \begin{problem}\label{pb:8} 
  \begin{alignat}{2}
\underset{\eta_m}{\rm{max}} &\quad  \eta_1|h_1|^4+\eta_2 |h_2|^4  \label{8obj:1} \\
\rm{s.t.} & \quad  
  \eta_1 |g_1|^2|h_1|^2    + \eta_2 |g_2|^2|h_2|^2 
    \leq \frac{ |h_0|^2}{ \epsilon_0 } \label{8st:1}
\\
& \quad 0\leq \eta_1\leq 1, \quad 0\leq \eta_2\leq 1 \label{8st:2}
,
  \end{alignat}
\end{problem} 
which is not related to $P_0$  due to the used high-SNR assumption.  Due to space limitations, we only focus on the case that the two users have the same channel gains to the base station, and ${\rm BD}_2$ has a stronger connection to ${\rm U}_0$ than  ${\rm BD}_1$, i.e., $|h_1|^2=|h_2|^2$ and $|g_1|^2<|g_2|^2$. The optimal solutions for other cases can be obtained in  a straightforward manner.

Note that constraint \eqref{8st:2} defines a square-shape region, and constraint \eqref{8st:1} defines a half-space. Provided that  $ |g_1|^2|h_1|^2    +  |g_2|^2|h_2|^2 
    \leq \frac{ |h_0|^2}{ \epsilon_0 } $, the half-space fully covers  the square-shape region. Therefore, the square-shape region is the feasibility region of the LP,  which means that $\eta_1=\eta_2=1$  \cite{Boyd}.   This is expected because   $ |g_1|^2|h_1|^2    +  |g_2|^2|h_2|^2 
    \leq \frac{ |h_0|^2}{ \epsilon_0 } $ means that the channels of the BackCom devices are weak, and hence    both the  devices can use  their maximal transmit power,   $\eta_1=\eta_2=1$, while still guaranteeing ${\rm U}_0$'s QoS requirement.

Provided that  $ |g_1|^2|h_1|^2    +  |g_2|^2|h_2|^2 
    > \frac{ |h_0|^2}{ \epsilon_0 } $, the intersection between the half-space and the square-shape region is the feasibility region of the LP. For the considered case,   i.e., $|h_1|^2=|h_2|^2$ and $|g_1|^2<|g_2|^2$, the feasibility region of the LP is shown  in Fig. \ref{fig2}. With some straightforward algebraic manipulations, the closed-form expressions for the optimal solutions of problem \eqref{pb:8} can be obtained as follows: \cite{Boyd}
    \begin{align}\label{optimal eta}
    \eta_1^*=1,\quad \& \quad \eta_2^*=   \frac{ \frac{ |h_0|^2}{ \epsilon_0 }-|g_1|^2|h_1|^2 }{  |g_2|^2|h_2|^2 }.
    \end{align}
The solutions in  \eqref{optimal eta} are also expected, since the device with a weak connection to ${\rm U}_0$ can still use its full transmit power, i.e., $\eta_1^*=1$, but the transmit power of the device with a strong connection  to  ${\rm U}_0$  needs to be carefully controlled in order to avoid too much performance degradation to ${\rm U}_0$.

 \begin{figure}[t]\centering \vspace{-1em}
    \epsfig{file=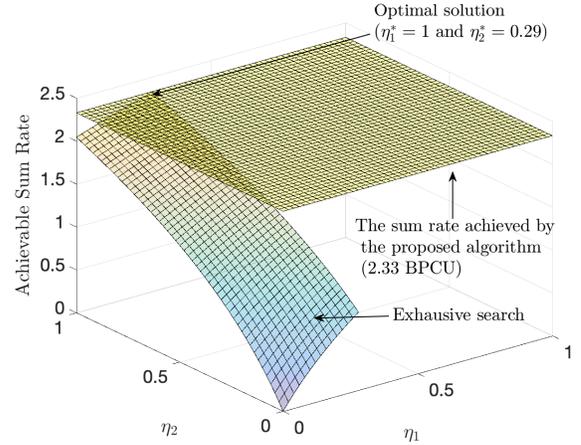, width=0.45\textwidth, clip=}\vspace{-0.5em}
\caption{ A deterministic study for the optimality of the obtained solution, where random fading is omitted and only path loss effects are considered. $M=2$,  $\alpha=0.005$. $|s_0|^2=1$. $P_0^*=0.043$,     $R_0=1$ bit per channel use (BPCU),    and the locations of   ${\rm BD}_1$ and ${\rm BD}_2$ are fixed at   $(-2,0)$ m and $(2,0)$ m, respectively.         }\label{fig3}\vspace{-1.5em}
\end{figure}

\vspace{-1em}

\section{Numerical Studies}\label{section simulation}
In this section, the performance of the proposed BAC-NOMA scheme is studied by using computer simulations, where   the path loss exponent is set as $3$,   $P_{\rm max}=20$ dBm,   $\sigma_n^2=-94$ dBm, the base station and the downlink user are located at $(0,0)$ m and $(3,0)$ m, respectively.

In Fig. \ref{fig3}, a deterministic study  is carried out to evaluate the optimality of  the obtained solution,  where random fading is omitted  and only path loss effects are considered. In particular, by using CVX to solve problem \eqref{pb:8}, the optimal solutions, $P_0^*$, $\eta_1^*$, and $\eta_2^*$, can be obtained. Exhaustive search  is also carried out to search the optimal choices for $\eta_1$ and $\eta_2$, by fixing   $P_0=P_0^*$. As shown in Fig. \ref{fig3}, the  optimal solutions obtained by the proposed scheme perfectly match  those obtained by exhaustive search. Furthermore,  we note that   the setups for the deterministic study satisfy those requirements outlined in Section \ref{subsection special case}, and the fact that the optimal solutions observed  in Fig. \ref{fig3} perfectly fit the closed-form expressions in  \eqref{optimal eta}   verifies the accuracy of   the analysis     in Section \ref{subsection special case}.    

In Figs. \ref{fig4} and \ref{fig5}, a general multi-user scenario   is considered. In particular, the BackCom devices are uniformly deployed in a square of edge length $5$ m and with the base station located at its center.  Both large-scale path loss and   small-scale multi-path fading are considered. The following two benchmark schemes are used. The first one  is termed BAC-OMA, where the $M$  BackCom devices are scheduled to be paired with the downlink user in a round-robin manner.  Optimal resource allocation is also considered for BAC-OMA, which is a special case of  problem \eqref{pb:1} by choosing  $M=1$. The second benchmark scheme is BAC-NOMA with random resource allocation. We note that it is not  trivial to find  random choices of $\{P_0,\eta_1, \cdots, \eta_M\}$ because they   need to satisfy constraints \eqref{1st:1}, \eqref{1st:2}, and \eqref{1st:3}.  For the conducted simulations, the random choices of $\{P_0,\eta_1, \cdots, \eta_M\}$ are found by solving the following feasibility problem: 
  \begin{problem}\label{pb:9} 
  \begin{alignat}{2}
 {\rm find}&\quad    
 P_0,\eta_1, \cdots, \eta_M \label{9obj:1} \quad\quad
\rm{s.t.} & \quad  \eqref{1st:1}, \eqref{1st:2}, \eqref{1st:3},
  \end{alignat}
\end{problem} 
which is an LP problem and hence can be solved similarly to problem \eqref{pb:8}.

In Fig. \ref{fig4}, the performance of the three transmission schemes is shown as a function of the number of BackCom devices, $M$.  As can be seen from the figure, the two BAC-NOMA schemes can realize significant performance gains over BAC-OMA, which is due to the reason that multiple BackCom devices can be served simultaneously by the BAC-NOMA schemes. As a result, the more devices there are, the larger sum rates the BAC-NOMA schemes can achieve. Between the two BAC-NOMA schemes, Fig. \ref{fig4} demonstrates that the use of the proposed resource allocation algorithm can realize a larger sum rate, and this performance gain can be further  increased by increasing $M$. It is worth to point out that the two BAC-NOMA schemes require similar computational complexity, since the random scheme still needs to solve an LP  problem. Another interesting observation from Fig. \ref{fig4} is that the performance of BAC-OMA is not a function of $M$, since for BAC-OMA, a single uplink BackCom device is randomly scheduled, regardless how many devices there are. 
 
  \begin{figure}[t]\centering \vspace{-1em}
    \epsfig{file=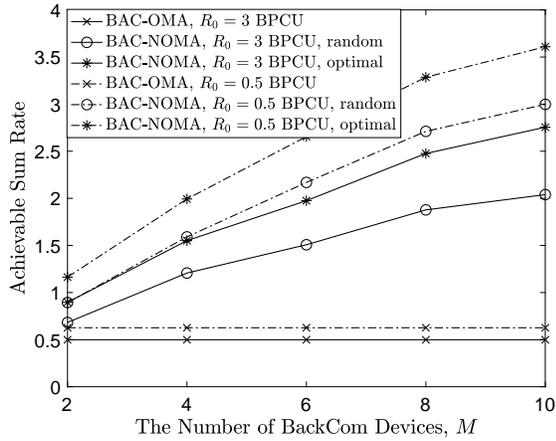, width=0.4\textwidth, clip=}\vspace{-0.5em}
\caption{ Average sum rates achieved by the three transmission schemes as a function of the number of BackCom devices $M$.  $\alpha=0.01$     }\label{fig4}\vspace{-1.5em}
\end{figure}

In Fig. \ref{fig5}, the performance of the three considered transmission schemes is shown as a function of the FD self-interference coefficient, $\alpha$, where different choices of $M$ are used. Note that a single curve is shown for BAC-OMA, since its performance is not affected by the choice of $M$.  As can be observed from the figure, the BAC-NOMA scheme with optimal resource allocation yields the best performance among the three schemes, which is consistent to Fig. \ref{fig4}. Recall that $\alpha$ implies how much residual self-interference exists. Therefore, it is expected that the performance of the three FD transmission schemes should be degraded by increasing $\alpha$, which is indeed observed in Fig. \ref{fig5}. Another important  observation from the figure is that the performance gain of NOMA over OMA is diminishing by increasing $\alpha$, which can be explained as follows. When $\alpha$ is very large, \eqref{sum ratex} indicates that   FD self-interference becomes severe, i.e., this is a low SNR scenario, where the use of NOMA does not bring a significant performance gain, as demonstrated by the existing studies for  NOMA \cite{Nomading,7676258}.

 \begin{figure}[t]\centering \vspace{-1em}
    \epsfig{file=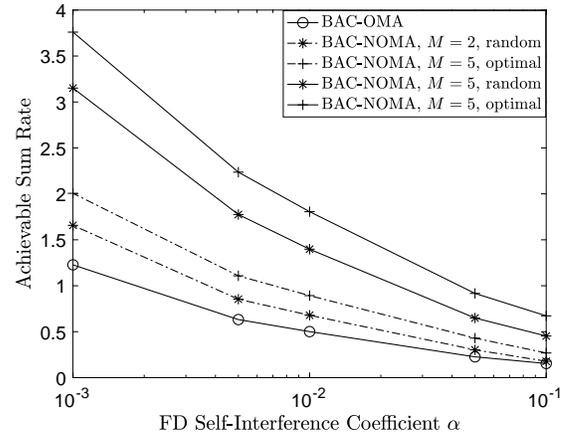, width=0.4\textwidth, clip=}\vspace{-0.5em}
\caption{ Average sum rates achieved by the three transmission schemes as a function of the FD residual interference coefficient $\alpha$.  $R_0=3$ BPCU.     }\label{fig5}\vspace{-1.5em}
\end{figure}
\vspace{-1em}
\section{Conclusions}
In this letter,    BAC-NOMA has been proposed to  realize  effective  spectrum and energy    cooperation among uplink and downlink transmission.  The   problem of maximize uplink throughput while  suppressing  the interference between downlink and uplink transmission has been formulated and solved. Computer simulations have also been provided to  demonstrate the superior  performance   of   BAC-NOMA. 
 
\vspace{-1em}
 \bibliographystyle{IEEEtran}
\bibliography{IEEEfull,trasfer}

\begin{thebibliography}{10}
\providecommand{\url}[1]{#1}
\csname url@samestyle\endcsname
\providecommand{\newblock}{\relax}
\providecommand{\bibinfo}[2]{#2}
\providecommand{\BIBentrySTDinterwordspacing}{\spaceskip=0pt\relax}
\providecommand{\BIBentryALTinterwordstretchfactor}{4}
\providecommand{\BIBentryALTinterwordspacing}{\spaceskip=\fontdimen2\font plus
\BIBentryALTinterwordstretchfactor\fontdimen3\font minus
  \fontdimen4\font\relax}
\providecommand{\BIBforeignlanguage}[2]{{%
\expandafter\ifx\csname l@#1\endcsname\relax
\typeout{** WARNING: IEEEtran.bst: No hyphenation pattern has been}%
\typeout{** loaded for the language `#1'. Using the pattern for}%
\typeout{** the default language instead.}%
\else
\language=\csname l@#1\endcsname
\fi
#2}}
\providecommand{\BIBdecl}{\relax}
\BIBdecl

\bibitem{you6g}
X.~You, C.~Wang, J.~HUANG \emph{et~al.}, ``Towards {6G} wireless communication
  networks: vision, enabling technologies, and new paradigm shifts,''
  \emph{Sci. China Inf. Sci.}, vol.~64, no. 110301, pp. 1--74, Feb. 2021.

\bibitem{mojobabook}
M.~Vaezi, Z.~Ding, and H.~V. Poor, \emph{Multiple Access Techniques for {5G}
  Wireless Networks and Beyond}.\hskip 1em plus 0.5em minus 0.4em\relax
  Springer International Publishing, 2019.

\bibitem{7676258}
S.~M.~R. Islam, N.~Avazov, O.~A. Dobre, and K.~S. Kwak, ``Power-domain
  non-orthogonal multiple access {(NOMA)} in {5G} systems: Potentials and
  challenges,'' \emph{IEEE Commun. Surveys Tuts.}, vol.~19, no.~2, pp.
  721--742, 2017.

\bibitem{7105651}
Z.~{Zhang}, X.~{Chai}, K.~{Long}, A.~V. {Vasilakos}, and L.~{Hanzo}, ``Full
  duplex techniques for {5G} networks: self-interference cancellation, protocol
  design, and relay selection,'' \emph{IEEE Commun. Mag.}, vol.~53, no.~5, pp.
  128--137, May 2015.

\bibitem{8306094}
Z.~{Ding}, P.~{Fan}, and H.~V. {Poor}, ``On the coexistence between full-duplex
  and {NOMA},'' \emph{IEEE Wireless Commun. Lett.}, vol.~7, no.~5, pp.
  692--695, Mar. 2018.

\bibitem{yuanweijsac}
Y.~Liu, Z.~Ding, M.~Elkashlan, and H.~V. Poor, ``Cooperative non-orthogonal
  multiple access with simultaneous wireless information and power transfer,''
  \emph{IEEE J. Sel. Areas Commun.}, vol.~34, no.~4, pp. 938--953, Apr. 2016.

\bibitem{Zhangruipower2013}
R.~Zhang and C.~K. Ho, ``{MIMO} broadcasting for simultaneous wireless
  information and power transfer,'' \emph{IEEE Trans. Wirel. Commun.}, vol.~12,
  no.~5, pp. 1989--2001, May 2013.

\bibitem{backnoma}
Z.~Ding, ``Harvesting devices' heterogeneous energy profiles and {QoS}
  requirements in {IoT}: {WPT-NOMA} vs {BAC-NOMA},'' \emph{IEEE Trans.
  Commun.}, to appear in 2021.

\bibitem{8907447}
R.~{Long}, Y.~{Liang}, H.~{Guo}, G.~{Yang}, and R.~{Zhang}, ``Symbiotic radio:
  A new communication paradigm for passive internet of things,'' \emph{IEEE
  Internet of Things Journal}, vol.~7, no.~2, pp. 1350--1363, 2020.

\bibitem{8636518}
Q.~{Zhang}, L.~{Zhang}, Y.~{Liang}, and P.~{Kam}, ``Backscatter-{NOMA}: A
  symbiotic system of cellular and internet-of-things networks,'' \emph{IEEE
  Access}, vol.~7, pp. 20\,000--20\,013, 2019.

\bibitem{wongwcnc20}
F.~D. Ardakani and V.~W. Wong, ``Joint reflection coefficient selection and
  subcarrier allocation for backscatter systems with {NOMA},'' in \emph{Proc.
  IEEE Wireless Commun. and Networking Conf. (WCNC)}, May 2020, pp. 1--6.

\bibitem{7372448}
D.~W.~K. {Ng}, Y.~{Wu}, and R.~{Schober}, ``Power efficient resource allocation
  for full-duplex radio distributed antenna networks,'' \emph{IEEE Trans.
  Wireless Commun.}, vol.~15, no.~4, pp. 2896--2911, Apr. 2016.

\bibitem{GRADSHTEYN}
I.~S. Gradshteyn and I.~M. Ryzhik, \emph{Table of Integrals, Series and
  Products}, 6th~ed.\hskip 1em plus 0.5em minus 0.4em\relax New York: Academic
  Press, 2000.

\bibitem{Boyd}
S.~Boyd and L.~Vandenberghe, \emph{Convex Optimization}.\hskip 1em plus 0.5em
  minus 0.4em\relax Cambridge University Press, Cambridge, UK, 2003.

\bibitem{Nomading}
Z.~Ding, Z.~Yang, P.~Fan, and H.~V. Poor, ``On the performance of
  non-orthogonal multiple access in {5G} systems with randomly deployed
  users,'' \emph{IEEE Signal Process. Lett.}, vol.~21, no.~12, pp. 1501--1505,
  Dec. 2014.

\end{thebibliography}

  \end{document}